\newtheorem{theorem}{Theorem}
\newtheorem{lemma}[theorem]{Lemma}
\newtheorem{definition}{Definition}
\begin{document}

\title{How many subsets of edges of a directed multigraph can be represented as trails?}
\author{Joseph Shayani\footnote{I am indebted to Michael Ostrovsky for presenting me with the problem of interest in this paper.}}

\maketitle
\begin{abstract}
For each subset of edges of a (directed multi-) graph, one may determine whether the edges can be represented as a trail. I prove that the fraction trail-representable subsets of edges is at most $O(\sqrt{\log m}/\sqrt{m})$, where $m$ is the number of edges, and show by example that the upper bound is nearly tight.
\end{abstract}

\section{Introduction}
For each subset of edges of a graph, one can determine whether it is possible to arrange the edges in some order such that the resulting sequence of edges is a trail. Thus, for any graph one may count the number of subsets of edges representable as trails and consider the fraction of the total number of subsets of edges ($2^m$ where $m$ is the number of edges) that are representable as trails. 

One may suspect that the trail-fraction must become small as the number of edges grows large. In this paper, I confirm this suspicion. I prove that the trail-fraction of a graph with $m$ edges is bounded by $O(\sqrt{\log {m}} / \sqrt{m})$. Moreover, I show that this upper bound is nearly tight by demonstrating a family of graphs with trail-fraction at least $\Omega(1/\sqrt{m})$.

The trail-fraction may be of interest, for example, when a graph represents a network of buyers and sellers. In such a setting, Hatfield et al. \cite{hatfield2015chain} prove that the matching-theoretic property of stability, which \textit{a priori} is a condition over all subsets of edges, is actually equivalent to the property ``chain stability," a condition over only subsets of edges representable as a trail or ``chain." The trail-fraction may be interpreted as the savings generated by this equivalence.

\section{Setup}\label{intro}
I begin by giving the definitions of multigraphs and trails, and then I define the trail-fraction.

\begin{definition} 
A \emph{directed mutligraph} $G=(V,E)$ is a set of vertices $V$ together with a set of edges $E$. Each edge $e\in E$ has an origin $s(e)$ and end $t(e)$ in $V$ with $s(e) \ne t(e)$. In a multigraph, distinct edges $e\ne e'$ \textit{may} have the same origin $s(e)=s(e')$ and end $t(e)=t(e')$. 
\end{definition}
I will always mean ``directed multigraph" when I write ``graph."

\begin{definition} 
A subset of edges $T \subset E$ can be represented as a \emph{trail} if there exists an ordering $e_1, e_2, \dots, e_k$ of edges in $T$ (choosing each exactly once) such that $t(e_i)=s(e_{i+1})$ for all $i = 1, \dots, k-1$.
\end{definition}

Note that my definition of trail allows for repetition of vertices (but not repetition of edges, because I am considering subsets of edges). 

Consider the set $\Gamma(m)$ of all graphs $G=(V,E)$ with edge set of size $|E| = m$. Let $d(G)$ be the number of distinct subsets $T \subset E$ that can be represented as a trail. Define the trail-fraction
$$f(G) = d(G)/2^m.$$

\section{Upper bound for trail-fraction}
\begin{theorem}\label{thm}
Let $f(G)$ be the trail-fraction as defined in Section \ref{intro}. Then
$$f(G) \le \sqrt{\log{m}}/\sqrt{m},$$
where $m$ is the number of edges in $G$.
\end{theorem}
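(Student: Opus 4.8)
The plan is to discard connectivity and keep only the Eulerian \emph{degree} constraint that every trail must satisfy, then estimate its probability under the uniform measure on subsets. If $T\subseteq E$ is trail-representable then the subdigraph $(V,T)$ has an Eulerian trail, so at most two vertices can have nonzero imbalance $D_v:=\deg^+_T(v)-\deg^-_T(v)$, and any nonzero imbalance equals $\pm 1$. Including each edge independently with probability $\tfrac12$ makes $T$ uniform, so $f(G)=\Pr[T\text{ trail-representable}]\le \Pr[\text{at most two }v\text{ have }D_v\ne 0]$, and it suffices to bound the right-hand side.

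The engine is a sequential ``reveal-by-vertex'' conditioning that converts local balance into a product bound. Fix an ordering $v_1,\dots,v_n$ of the vertices, assign each edge to the earlier of its two endpoints, let $f_i$ be the number of edges assigned to $v_i$ (so $\sum_i f_i=m$), and reveal the coins $X_e$ in these $n$ blocks. The key point is that after block $i$ every edge incident to $v_i$ has been revealed, so $D_{v_i}$ is determined, while the $f_i$ coins of block $i$ are fresh and all incident to $v_i$; hence, conditioned on the earlier blocks, $D_{v_i}$ equals a fixed constant plus a signed sum of $f_i$ fair coins. Writing $\rho_0(f)=\binom{f}{\lfloor f/2\rfloor}2^{-f}\le\min\{\tfrac12,\,1/\sqrt f\}$ for the largest atom of a shifted binomial, this gives $\Pr[D_{v_i}=0\mid\text{earlier}]\le\rho_0(f_i)$ and $\Pr[D_{v_i}\in\{-1,0,1\}\mid\text{earlier}]\le 3\rho_0(f_i)$. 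Chaining these conditional estimates yields a master bound $f(G)\le\prod_{i\in J}(\text{factor at }v_i)$ for any $J$, where I may impose the exact-balance factor $\rho_0(f_i)$ at most vertices and a weaker factor at the at most two vertices allowed to be unbalanced.

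I would then split on $n$. If $n\le 2\log m$, some vertex has degree $\Delta\ge 2m/n\ge m/\log m$, and the single factor $\Pr[D_v\in\{-1,0,1\}]\le 3\rho_0(\Delta)\le 3\sqrt{\log m}/\sqrt m$ already gives the theorem; this is exactly where the $\sqrt{\log m}$ is spent. If instead $n>2\log m$, I free the two possibly-unbalanced vertices, pay a union-bound factor $\binom n2$ for their identity, and impose exact balance $\rho_0(f_i)\le\tfrac12$ at every other vertex with $f_i\ge 1$. Ordering the vertices so that children precede parents in a spanning forest makes all but one vertex per component have a later neighbor, so $\Omega(n)$ factors of $\tfrac12$ appear and the product is at most $\binom n2\,2^{-\Omega(n)}$, which sits well below $\sqrt{\log m/m}$ once $n\gtrsim\log m$. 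Combining the regimes proves the bound.

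The main obstacle is the interface between the two regimes and the connectivity that is smuggled into the second one: the exact-balance product is powerful only when many vertices carry positive forward degree $f_i$, so guaranteeing $\Omega(n)$ such factors (to beat $\binom n2$) forces me to choose the ordering to respect the component structure and to rule out graphs whose edges pile onto very few vertices --- precisely the high-degree graphs peeled off in the first regime. Calibrating the threshold (here $m/\log m$) so that the single-vertex estimate and the union-bound estimate both land at $\sqrt{\log m/m}$, and sharpening the constants (the stray factors of $3$ and $\binom n2$) enough to reach the stated bound with constant $1$ rather than merely $O(\cdot)$, is the delicate part of the argument.
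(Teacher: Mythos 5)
Your proposal is correct, and at the level of architecture it is essentially the paper's proof: the coin-flip interpretation of $f(G)$, the necessary condition that all but at most two vertices have zero imbalance, and a two-regime split calibrated at degree threshold $m/\log m$, with binomial anticoncentration giving $O(\sqrt{\log m}/\sqrt{m})$ in the high-degree regime and a product of conditional factors of $\tfrac12$ beating a $\binom{n}{2}$ union bound in the many-vertex regime. The genuine difference is the device that extracts a fresh coin at each vertex: the paper proves a greedy lemma (an ``edge-increasing sequence,'' built by repeatedly removing a vertex of minimum positive remaining degree, which eliminates at most one other vertex per step and hence has length at least half the non-isolated vertices), whereas you assign each edge to its earlier endpoint under a leaf-to-root spanning-forest order, so that every non-root vertex of a nontrivial component carries a fresh coin. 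The two devices yield the identical count, but yours packages the conditioning as a systematic block-martingale and dispenses with the separate lemma; your high-degree case is also slightly tidier, since you view $D_v$ itself as a shifted binomial over all $\Delta$ incident edges where the paper first conditions on the out-edges. Two points of care. First, the quantity you split on must be the number of \emph{non-isolated} vertices, which is exactly what the paper's pigeonhole step arranges: with $n=|V|$ your second regime fails on, say, two vertices joined by $m$ parallel edges plus $m$ isolated vertices, where $n>2\log m$ but the balance product becomes empty once the two edge-carrying vertices are freed; with $n$ counting only non-isolated vertices, every nontrivial component has at least two vertices, so at least $n/2-2$ factors of $\tfrac12$ survive and the regime closes. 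Second, you need not agonize over reaching the literal constant $1$: the paper's own proof likewise concludes only $O(\sqrt{\log m}/\sqrt{m})$, so your argument establishes exactly as much as the paper's does.
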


The intuition for the proof is the following: Imagine that for each edge $e$ in a graph $G$, I independently flip a fair coin in order to decide whether to include $e$ in subset $T$. Then  $d(G) / 2^m$ is exactly the probability that $T$ is representable as a trail. $G$ must have either a vertex $v$ with high degree or a large number of vertices.\footnote{I am indebted to Jan Vondrak for his suggestion to examine two cases of this sort in the proof of Theorem \ref{thm}.} In order for $T$ to be a trail, each vertex in $T$ must have balanced (within $\pm 1$) in and out degrees. The probability of balancing a node with large degree will be small, as will be the probability of balancing many nodes simultaneously (even if these nodes have small degree). Hence in any case, the probability that $T$ is a trail will be small.

Now I present the formal argument. I begin by introducing notation to handle the notion of the degree of a vertex and then make precise the observation that balanced in- and out- degrees are a necessary condition for trails. 
\begin{definition}
The \emph{in-degree} $\delta^-(v, G_1)$ (resp. \emph{out-degree} $\delta^+(v, G_1)$) of a vertex $v\in V$ with respect to subset $G_1=(V_1, E_1)$ of graph $G=(V,E)$ is the number of inbound edges $|\{e\in E_1: v=t(e)\}|$ (resp. outbound edges $|\{e\in E_1: v=s(e)\}|$). The \emph{total degree} $\delta(v, G_1)$ is equal to the sum of in- and out-degrees $\delta^-(v, G_1) + \delta^+(v, G_1)$.
\end{definition}

For a trail $T\subset G$ of size $k$, the in- and out-degrees with respect to $T$ of each vertex in $T$  (i.e. each vertex $v$ for which there exists edge $e\in T$ with $s(e)=v$ or $t(e)=v$) must be equal, with the exception of perhaps the nodes $s(e_1), t(e_k)$. In fact, the same condition must hold for all but two vertices in $V$ (because the vertices not in $T$ will have in- and out-degrees $0$ with respect to $T$). 

Next we introduce the notion of a edge-increasing sequence of vertices and prove a lemma that gives a lower bound on the size of such a sequence we can find in any graph.

\begin{definition}
For a subset of vertices $U\in V$ in graph $G=(V,E)$, write $I(U)$ for the set of edges incident to at least one $v\in U$, or simply $I(v)$ if $U$ is a singleton $\{v\}$. Define an \emph{edge-increasing sequence} of length $r$ to be a sequence of vertices $v_1, \dots, v_r$ with the property that $I(v_{i+1}) \not\subset I(\{v_1,\dots, v_i\})$ for each $i = 1,\dots, r-1$.
\end{definition}

The edge-increasing property will be exactly what I need in order to drive down the cumulative probability of balancing in- and out-degrees at each vertex in the case that $G$ has many vertices. Note that an independent set of vertices is a special case of an edge-increasing sequence.

\begin{lemma}
In graph $G=(V,E)$, if $I(v) \ge 1$ for every $v\in V$, then there exists an edge-increasing sequence of length $|V|/2$.
\end{lemma}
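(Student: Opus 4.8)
The plan is to build the edge-increasing sequence greedily while carefully accounting for how many vertices each newly added edge can "consume." The key observation is that an edge-increasing sequence fails to extend only when the next vertex contributes no new edges, i.e. $I(v) \subset I(\{v_1,\dots,v_i\})$. Since every vertex has $I(v) \ge 1$ by hypothesis, every vertex has at least one incident edge, and the only way a vertex $v$ can be "blocked" from extending the sequence is if all of its incident edges already appear in $I(\{v_1,\dots,v_i\})$.

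First I would set up the greedy construction: start with any vertex $v_1$, and at each step, if any unused vertex $w$ satisfies $I(w) \not\subset I(\{v_1,\dots,v_i\})$, append it as $v_{i+1}$. The sequence terminates at length $r$ when every remaining vertex $w \in V \setminus \{v_1,\dots,v_r\}$ has $I(w) \subset I(\{v_1,\dots,v_r\})$. The heart of the argument is then a counting bound on how many such "blocked" vertices can exist. Since each blocked vertex $w$ has $I(w) \ge 1$, it has at least one incident edge $e$, and that edge must lie in $I(\{v_1,\dots,v_r\})$. The crucial structural fact is that each edge $e$ is incident to exactly two vertices (its origin and end, which are distinct since $s(e)\ne t(e)$), so each edge in $I(\{v_1,\dots,v_r\})$ can serve as the "witness" for at most two vertices outside the sequence — and in fact I would argue more tightly that at least one endpoint of such an edge already lies in the sequence.

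I would then carry out the main counting step. Each edge $e$ in $I(\{v_1,\dots,v_r\})$ is incident to at least one sequence vertex $v_j$; its other endpoint is a single vertex, which is the only potential blocked vertex that $e$ can witness. Thus the number of blocked vertices is at most the number of edges incident to the sequence whose "far" endpoint lies outside the sequence, which is at most the total degree contributed by the $r$ sequence vertices — but a cleaner bound suffices: each of the $r$ sequence vertices, together with the structure of incidences, lets us cap the blocked vertices by roughly $r$. Concretely, I expect to show that the number of vertices outside the sequence is at most $r$, since each can be injectively charged to a distinct sequence vertex or edge, giving $|V| \le 2r$ and hence $r \ge |V|/2$.

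The main obstacle will be making the injective charging of blocked vertices to sequence vertices fully rigorous, since naively an edge incident to the sequence has one outside endpoint but multiple edges could share that outside endpoint, and conversely one sequence vertex could be the near endpoint of many edges. The delicate point is ruling out double-counting: I would argue that a blocked vertex $w$, having all its incident edges inside $I(\{v_1,\dots,v_r\})$, can be mapped to an incident sequence vertex, and that this map can be made injective by choosing witnesses carefully — perhaps by observing that if two blocked vertices shared all their sequence-neighbors the edge-increasing property or the termination condition would be violated. If a clean injection proves awkward, the fallback is the double-counting bound via edges: the blocked vertices and sequence vertices partition $V$, and a handshake-style argument on edges crossing between them yields $|V\setminus\{v_1,\dots,v_r\}| \le r$, closing the proof.
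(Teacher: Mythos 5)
There is a genuine gap: your counting claim is false for the greedy process as you have set it up. You allow the sequence to start at an arbitrary vertex and extend by an arbitrary eligible vertex, and then hope to show that when the process terminates at length $r$, at most $r$ vertices are blocked. Consider a star: a center $c$ joined by a single edge to each of $n$ leaves $u_1,\dots,u_n$. If your greedy picks $v_1=c$, then $I(c)$ is the entire edge set, every leaf is immediately blocked, and the process terminates with $r=1$ while $n = |V|-1$ vertices are blocked. No injective charging of blocked vertices to sequence vertices (or to edges ``per sequence vertex'') can exist here, and the handshake-style fallback fails for the same reason: one sequence vertex of high degree can block arbitrarily many outside vertices, and the number of crossing edges is not bounded by $r$. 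Note the lemma itself is still true for the star --- the sequence $u_1,\dots,u_n$ of leaves is edge-increasing --- which shows the flaw is in which vertex you pick, not in the statement.

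The missing idea, which is the heart of the paper's proof, is a careful selection rule: at each step choose the vertex $v$ with the \emph{fewest} remaining incident edges, then delete $v$, its incident edges, and any vertices thereby isolated. Minimality is what controls the damage: if deleting $I(v)$ isolates some other vertex $u$, then (in the remaining graph) $I(u)\subset I(v)$, and since $v$ had the smallest incident-edge set this forces $I(u)=I(v)$, i.e.\ every deleted edge runs between $v$ and $u$, so at most one vertex besides $v$ is lost per step. That yields at least $|V|/2$ steps, hence an edge-increasing sequence of length $|V|/2$. Your proposal, by contrast, defers exactly this difficulty to a ``careful choice of witnesses'' at the end, where it cannot be repaired; the choice has to be made during the construction, not after it.
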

\begin{proof} Initialize $S\subset V$ to be the empty set. Each time I add a vertex $v$ to $S$, set 
$$G = (V \backslash \{v\}, E \backslash I(v)),$$
and further remove from $V$ the set of vertices with no remaining incident edges after removing $I(v)$. If I iteratively add to $S$ the vertex $v$ with the fewest remaining incident edges, I eliminate from $V$ at most $2$ vertices: $v$ and at most one other. Indeed, if upon adding $v$ and removing edges $I(v)$, I leave some other vertex $u$ with no more edges, then it must have been true that $I(u)\subset I(v)$. Since $v$ had the smallest set of incident edges, it must have been that $I(u)= I(v)$, i.e. I removed only edges between $v$ and $u$. Hence I can increase the size of $S$ at least $|V|/2$ times. 
\end{proof}

\begin{proof}[Proof of Theorem \ref{thm}] Fix graph $G=(V,E)$ with $m$ edges. Let $k$ be a number I will choose later. If no vertex $v\in V$ has degree $\delta(v, E) \ge k$, then by the pigeonhole principle, $V$ contains at least $2m / k$ distinct vertices $v$ with $|I(v)| \ge 1$, and by the previous lemma, $V$ has an edge-increasing sequence of size $m/k$.  To summarize, one the following two holds:
\begin{enumerate}
\item There exists an vertex $v\in V$ of degree $\delta(v,E)$ at least $k$.
\item There exists an edge-increasing sequence of size at least $m/k$.
\end{enumerate}

In the first case, either the in-degree or out-degree of $v$ with respect to $E$ is at least $k/2$, without loss of generality the in-degree. Consider flipping a coin for each edge in $E$ for inclusion in $T$, and condition on the event that $\delta^+(v, T) = j  \in [0,1,\dots, k/2]$. It is necessary that in order for $T$ to be a trail $\delta^+(v, T)  \in \{ j-1, j, j+ 1\}$. Let $c = \delta^-(v, E)$, and let $X$ be the random variable $\delta^-(v, T)$ representing the number of edges selected for inclusion in T. Then,
\begin{align*}
P(X \in \{j-1, j, j + 1\}) &= (1/2)^c \left( \binom{c}{j-1} + \binom{c}{j} + \binom{c}{j+1}\right) \\
&\le 3 (1/2)^c \binom{c}{c/2} \\
&= O(1/\sqrt{c}),
\end{align*}
where we used the Stirling approximation to handle the binomial factor.\footnote{Stirling's bound states that $\sqrt{2\pi}n^{n+1/2}e^{-n} \le n! \le en^{n+1/2}e^{-n}$. Hence $$\binom{c}{c/2} = \frac{c!}{(c/2)!^2} \le \frac{ec^{c+1/2}e^{-c} }{(\sqrt{2\pi}(c/2)^{c/2+1/2}e^{-c/2})^2} = \frac{e}{2\pi c^{1/2} (1/2)^{c+1}} = \frac{e}{\pi \sqrt{c} (1/2)^c}.$$ \label{fn}} Now, $c \ge k/2$, so $f(G) = O(1/\sqrt{k})$. 

In the second case, write $r = m/k$ for the size of the edge-increasing sequence of vertices $S = \{v_1, \dots, v_r\}$. As I observed, in order for $T$ to be a trail, it is necessary that $\delta^+(v, T) = \delta^-(v, T)$ for all but two vertices $v\in S$. For each $i=1, \dots, r$, choose an edge $e_i \in I(v_i) \backslash I(\{v_1,\dots, v_{i-1}\})$. Conditional on all coin flips up on all edges $I(\{v_1,\dots, v_{i-1}\}) \backslash \{e_i\}$, the probability over coin flip on $e_i$ that $\delta^+(v_i, T) = \delta^-(v_i, T)$ is at most $1/2$. Let $X$ be the random variable equal to the number of values of $i$ for which $\delta^+(v_i, T) = \delta^-(v_i, T)$. Then, 
\begin{align*}
P(X \ge r-2) &\le \binom{r}{2} (1/2)^{r - 2}  + r (1/2)^{r - 1} + (1/2)^r\\
&\le (2 r^2 + 2r +1) (1/2)^r \\
&\le 4r^2 (1/2)^r 
\end{align*}
for $r > 1$. Taking $r = \log_2{m}$ yields maximal probability $4(\log_2{m})^2/m$, and this requires us to take $k = m/\log_2{m}$, which yields maximal probability $O(\sqrt{\log{m}}/\sqrt{m})$ in the first case. Asymptotically, this first case probability dominates. \end{proof}

\section{Lower bound}

In this section, I demonstrate a family of graphs $\{G(m)\}$ with $f(G(m))= \Omega(1/\sqrt{m})$. In particular, the upper bound presented in the previous section is tight up to a square-root logarithmic term. 

Consider the graph $G(m)=(V,E(m))$ with two vertices $V=\{v_1, v_2\}$ and $m/2$ edges $e$ each with $s(e)=v_1, t(e)=v_2$ and vice versa. I will count the even-sized subsets of $E$ that are representable as trails. The number of trail-representable subsets of size $2\ell$ is exactly the number of way to choose $\ell$ edges from the first set of $m/2$ edges (in one direction) times the number of way to choose $\ell$ edges from the second set of $m/2$ edges (in the other direction). So the total number of even length trail-representable subsets is
$$\sum_{\ell = 1}^{m/2} \binom{m/2}{\ell}^2 = \binom{m}{m/2},$$
where I used a counting identity.\footnote{Consider a set $S$ of $m$ distinct elements, and choose an arbitrary division of $S$ into disjoint subsets $A$ and $B$, each with $m/2$ elements. Each distinct choice of $m/2$ elements from $S$ corresponds uniquely to a choice of $\ell$ elements from $A$ and $m/2 - \ell$ from $B$ for exactly one value of $\ell$. Hence $$\binom{m}{m/2} = \sum_{\ell = 1}^{m/2} \binom{m/2}{\ell}\binom{m/2}{m/2 - \ell} = \sum_{\ell = 1}^{m/2} \binom{m/2}{\ell}^2.$$} 
The Stirling approximation\footnote{See footnote \ref{fn}.} yields asymptotic lower bound $d((G(m)) =\Omega({2^m}/{\sqrt{m}})$, so $f(G(m)) = \Omega(1/\sqrt{m})$.


\end{document}